\newtheorem{proposition}{Proposition}
\def\P{{\mathbb P}}
\def\E{{\mathbb E}}
\newcommand\ind[1]{\mathbbm{1}_{\left\{#1\right\}}}
\title{A Flow-aware MAC Protocol  for a Passive Optical Metropolitan Area Network }
\author[Ph. Robert]{Philippe  Robert}
\address[Ph. Robert, J. Roberts]{INRIA Paris --- Rocquencourt,  Domaine de Voluceau, 78153 Le Chesnay, France.}
\email{Philippe.Robert@inria.fr}
\urladdr{http://www-rocq.inria.fr/\string~robert}
\author[J. Roberts]{James  Roberts}
\email{James.Roberts@inria.fr}
\begin{document}
\maketitle

\begin{abstract}
The paper introduces an original MAC protocol for a passive optical metropolitan area network using time-domain wavelength interleaved  networking (TWIN)
. Optical channels are shared under the distributed control of destinations using a packet-based polling algorithm. This MAC is inspired more by EPON dynamic bandwidth allocation than the slotted, GPON-like access control generally envisaged for TWIN. Management of source-destination traffic streams is flow-aware with the size of allocated time slices being proportional to the number of active flows. This emulates a network-wide, distributed fair queuing scheduler, bringing the well-known implicit service differentiation and robustness advantages of this mechanism to the metro area network.  The paper presents a comprehensive performance evaluation based on analytical modelling supported by simulations. The proposed MAC is shown to have excellent performance in terms of both traffic capacity and packet latency.

\end{abstract}

\section{Introduction}

Among recent proposals for realizing a metropolitan area network (MAN) using optical technology, time-domain wavelength interleaved networking (TWIN) is a particularly attractive alternative, allowing cost effective, energy efficient communication using currently available technology \cite{Ross03,Saniee09}. TWIN uses wavelength selective optical cross-connects (OXCs) to create multipoint-to-point lightpaths in the form of trees, each connecting source routers to  a particular destination router. The lightpath wavelength in effect constitutes the destination router's address. The OXCs are programmed to passively direct all incoming light on a given wavelength to a particular outgoing fibre, bringing the signals progressively closer to the destination. 

Figure \ref{fig:twin} depicts the tree giving access to router $R_1$. Any source can send signals to $R_1$ simply by emitting them in the form of light bursts on the corresponding wavelength. Sources are equipped with one or more fast-tunable transmitters able to send bursts successively on all wavelengths.
It is important to realize that any bursts that are timed not to collide at the destination, cannot collide anywhere else in the network.  The drawing on the right is thus the logical equivalent of the network on the left. Of course, every arc in this graph would bear one lightpath in each direction but these are not represented for the sake of clarity. 

\begin{figure}[th]
\centering
\resizebox{8cm}{!}{\input{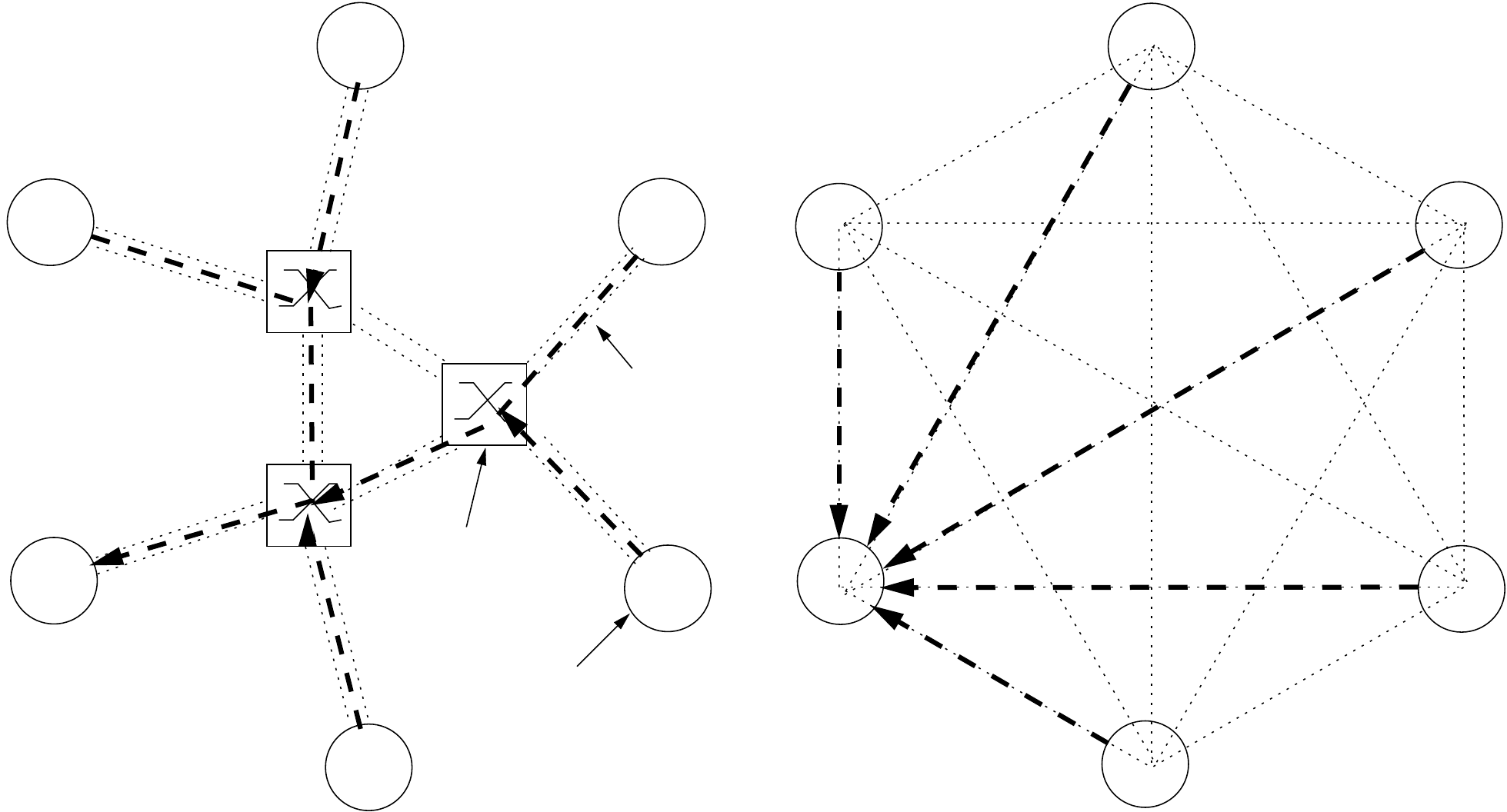_t}}
\caption{A six-node TWIN MAN: fibre infrastructure (left) is used to create destination rooted trees (right); dashed lines represent the multipoint to point lightpath serving router $R_1$. }\label{fig:twin}
\end{figure}

In this paper, we propose an original distributed MAC protocol to manage bandwidth sharing on the network lightpaths. As in \cite{Saniee09}, we suppose each destination independently orchestrates optical burst transmissions from the sources by allocating grants for bursts that are timed not to collide.  However, each source receives grants from several destinations and may not be able to fulfill all because it has only a limited number of transmitters. It will then have to partially or totally ignore one or more grants, leading to a loss of capacity.. 

Unlike the MAC envisaged for TWIN  in \cite{Ross03,Saniee09}, we do not impose a rigid time frame structure but assume rather that destinations issue grants for arbitrarily defined intervals specified by their start time and duration. This allows a flexible dynamic bandwidth allocation algorithm more akin to that of an EPON access network \cite{epontutorial} than that of the alternative GPON frame-based standards \cite{G9843}. Grant sizes, as for EPON, could be determined according to a variety of service policies. We propose here to apply a particular policy that realizes a form of flow-aware networking \cite{OR05}.

We assume flows can be reliably identified `on the fly' from packet header fields and that routers implement per-flow fair scheduling for each source-destination traffic relation. In essence (details are given later), each source periodically reports to the destination the current number of active flows, i.e., the number of flows currently holding at least one packet in the buffer. The destination issues grants to the source in return allowing it to send a burst including a `quantum' of bytes for each reported flow. The quantum size would typically be equivalent to one or several packets. All sources are allocated grants with the same frequency so that this service policy essentially realizes network-wide, per-flow fair sharing of lightpath bandwidth.

 As discussed in \cite{OR05}, per-flow fair sharing has two principal advantages. It realizes implicit service differentiation since streaming and conversational flows typically have a rate less than the fair rate and therefore experience low packet latency. It allows potentially high rate elastic flows to efficiently exploit residual bandwidth without any requirement for end systems to implement a particular ``TCP friendly'' congestion control algorithm.

The paper first presents our original MAC protocol that combines the advantages of TWIN passive optical networking with the simplicity and efficiency of flow-aware networking. We then proceed to the performance evaluation of a single lightpath tree, using analytical modelling backed up by simulation. The analysis proves that traffic capacity is optimal and demonstrates the network's excellent performance in terms of both packet latency and realized flow throughput.  The performance of an entire network is then evaluated, taking into account the loss of capacity due to transmitter blocking. We evaluate the amount of lost capacity as a function of the number of tunable transmitters equipping each source router.

 \section{A flow-aware MAC}
 

The envisaged MAC protocol relies on each destination router independently allocating grants to its source routers that are timed not to collide. Each source arbitrates between overlapping grants from different destinations when their number exceeds its transmission capability.   Grants take the form of time slices on the appropriate wavelength channel specified by a start time and a duration.  

%
 
 \subsection{Signalling reports and grants}
We assume sources report their current buffer contents to respective destinations using constant length messages. Time to send these reports is included in the grants attributed by the destination. A report is always tagged to the end of any data transmission and reports are also sent in isolation when a source has no data to send. 

Each destination continually emits grants to its source nodes realizing a kind of polling system designed to ensure new arrivals at the source are reported as soon as possible. The polling scheme is inspired by the EPON dynamic bandwidth allocation algorithm described in \cite{AFRR10b}. Unlike the EPON, however, we assume packets can be fragmented at will to fully utilize an assigned grant. 

While reports are signalled in-band, using the network lightpaths, it appears necessary to use out-of-band signalling over external media to communicate grants. A possible in-band signalling implementation for sending grants from destination $j$ to source $i$ would be to include them in the bursts previously granted to source $j$ by destination $i$. 
Unfortunately, this appears to lead to unavoidable deadlocks where both $i$ and $j$ have grants to send but neither has a scheduled burst in which they can be sent. This occurs because, sometimes, both $i$ and $j$ only generate a new grant {\em after} they have already fulfilled their latest grant in the opposite direction.  We therefore suppose grants are sent over some unspecified other network. For the present work, we characterize this simply by an assumed maximum grant transmission time.

\subsection{Synchronization}
All nodes must be carefully synchronized in real time to ensure precise transmission schedules are realizable in practice. This is possible using a regular exchange of time stamps, as performed in EPON  \cite{epontutorial}. The following procedure also measures the round trip propagation time between each pair of nodes. 

Referring to Figure \ref{fig:synchro}, node $j$ is the destination corresponding to some wavelength $\lambda_j$. This node emits a report to node $i$ on the appropriate wavelength, $\lambda_i$ say, at $j$'s local time $t_1$. Node $j$ writes time $t_1$ as a time stamp in the message. On receipt, the local clock of node $i$ is set to $t_1$. When node $i$ next sends a report to $j$ on $\lambda_j$ it time stamps the message with its local time of emission $t_2$. Node $j$ calculates the round trip time as shown in the figure caption. The local clock at $i$ is slow by unknown propagation time $\delta_{ji}$ with respect to clock at $j$. This shift is automatically taken into account by the algorithm described next that only needs to know the round trip time $\textsc{rtt}_{ij}$. In a MAN with $R$ routers, each one must maintain a separate local clock for each wavelength, one as destination and $R-1$ as source.

\begin{figure}[th]
\centering
\resizebox{7cm}{!}{\input{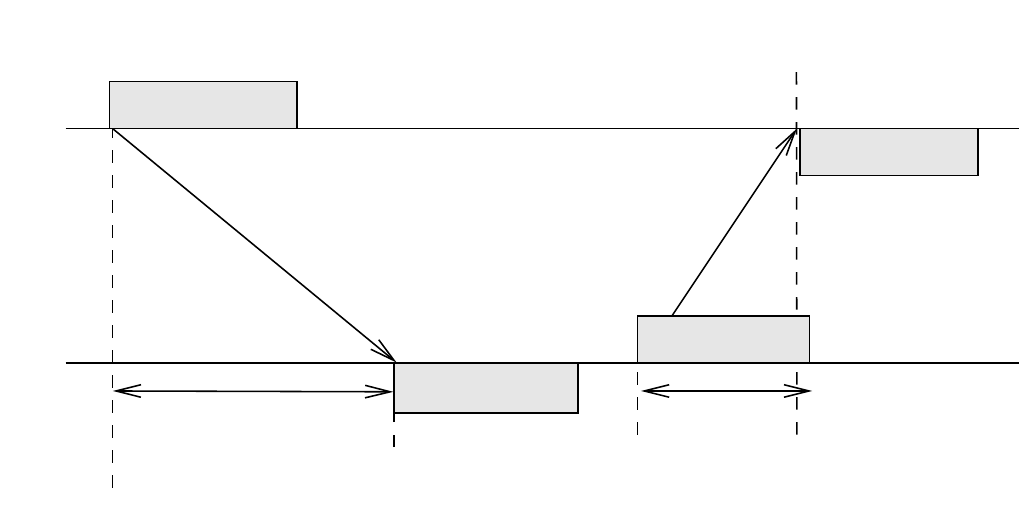_t}}
\caption{Measuring round trip times: $\textsc{rtt}_{ij} = \delta_{ij}+\delta_{ji} =(t_3 - t_1)-(t_2-t_1)=t_3-t_2$. }
\label{fig:synchro}
\end{figure}

\subsection{Grant recursions}
In computing grant epochs (i.e., the moment the grant is formulated using available reports; it is sent as soon as possible after that), a destination node $j$ must take account of the delay incurred before the burst arrives at the destination as well as the guard time necessary when a transmitter switches between wavelengths. A guard time of 1.5$\mu$s is standard for EPON. The grant transit delay includes propagation, transmission and any waiting time. We assume in the following that the delay from destination node $j$ to source node $i$ is bounded with some suitably high probability by $\delta_{ji} + \tau$. Time $\tau$ is an increment to the lightpath propagation time from $j$ to $i$. Its value depends on how grant signalling is actually performed. For the present work, we assume $\tau$ is given. 

The size of the grant is computed based on reported queue contents, on applying a particular service policy. The flow-aware service policy considered here is described later in Section \ref{sec:allocation}. The grant must also include time to send the report and the guard time before the channel can be used by another source. We denote the sum of these times by $\Delta_R$. 

The process of grants emitted by destination $j$ to all other nodes is specified by the functions $g(n)$, $s(n)$ and $d(n)$ defined as follows. The $n^{th}$ grant sent to some source by destination $j$ is formulated at nominal time $g(n)$ and instructs the source to transmit for duration $d(n)$ starting at  {\it source local time} $s(n)$. Assume the $(n+1)^{th}$ grant is issued to  source $i$.  Epochs $g$ and $s$ are calculated recursively as specified in Proposition \ref{prop:recursion}. 

\begin{proposition}
\label{prop:recursion}
The following recursions define a schedule that is feasible and ensures the optical channel is fully utilized:  
 \begin{eqnarray}
  g(n+1) &= &g(n) + d(n) + \Delta_R, \label{eq:g-update}\\
 s(n+1) &= &g(n+1) + \Delta_O  - \textsc{rtt}_{ij}, \label{eq:s-update}
\end{eqnarray}
where $\Delta_O$  is an offset satisfying $\Delta_O \geq  \max_i(\textsc{rtt}_{ij})+\tau$.
\end{proposition}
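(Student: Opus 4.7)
My plan is to verify two things separately: \emph{feasibility}, meaning (a) the grant reaches source $i$ in its local time before $s(n+1)$, and (b) the burst emitted under grant $n+1$ does not collide at destination $j$ with the burst emitted under grant $n$; and \emph{full utilization}, meaning successive bursts arrive back-to-back at $j$ modulo the unavoidable overhead $\Delta_R$. The whole argument then reduces to translating times between the local clocks of $i$ and $j$ using the synchronization conventions of the previous subsection, and plugging the two recursions into these translations.

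First I would set up the clock bookkeeping. By construction the clock of $i$ lags the clock of $j$ by $\delta_{ji}$, so an event occurring at $j$-time $T$ corresponds to $i$-time $T-\delta_{ji}$. The grant formulated at $j$-time $g(n+1)$ is transmitted and arrives at $i$ at $j$-time at most $g(n+1)+\delta_{ji}+\tau$, i.e.\ at $i$-local time at most $g(n+1)+\tau$. Feasibility condition (a) therefore becomes $s(n+1)\ge g(n+1)+\tau$, which by \eqref{eq:s-update} is $\Delta_O - \textsc{rtt}_{ij}\ge \tau$, and this is exactly guaranteed by the hypothesis $\Delta_O\ge \max_i \textsc{rtt}_{ij}+\tau$.

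Next I would compute the arrival time at $j$ of the burst triggered by the $(n+1)^{\text{st}}$ grant. Source $i$ starts emitting at $i$-local time $s(n+1)$, which is $j$-time $s(n+1)+\delta_{ji}$; the signal then propagates to $j$ in time $\delta_{ij}$, so the burst starts arriving at $j$ at $j$-time
\begin{equation*}
s(n+1)+\delta_{ji}+\delta_{ij} \;=\; s(n+1)+\textsc{rtt}_{ij} \;=\; g(n+1)+\Delta_O,
\end{equation*}
using \eqref{eq:s-update}. The key observation is that this arrival time depends only on $g(n+1)$ and $\Delta_O$, not on the identity of the source $i$. The burst then occupies the channel at $j$ for duration $d(n+1)$.

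Finally, feasibility (b) and full utilization both follow by differencing. The previous burst finishes arriving at $j$ at $j$-time $g(n)+\Delta_O+d(n)$; the next one starts arriving at $g(n+1)+\Delta_O$. Using \eqref{eq:g-update}, the gap between them is exactly $\Delta_R$, i.e.\ precisely the report plus guard interval that must separate two consecutive bursts from possibly different sources. This simultaneously rules out collisions and shows there is no idle time on the channel beyond the mandatory $\Delta_R$. I do not expect any real obstacle here: the argument is essentially a time-line diagram, and the only delicate point is to be consistent when switching between $i$-local and $j$-local clocks, which is why I would do the clock translation step very explicitly at the start.
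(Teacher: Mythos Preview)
Your proof is correct and follows essentially the same approach as the paper: verify that $s(n+1)\ge g(n+1)+\tau$ via \eqref{eq:s-update} and the lower bound on $\Delta_O$, then compute that each burst's leading edge arrives at $j$ at destination-clock time $g(\cdot)+\Delta_O$ so that \eqref{eq:g-update} makes the inter-burst gap exactly $\Delta_R$. Your version is more explicit about the clock translations (the paper compresses this into the remark that $s(n+1)$ is measured at the source clock), but the logic is identical.
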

 
 \begin{proof}
 Feasibility requires $g(n+1)   + \tau \leq s(n+1)$, the grant must arrive at the source before the scheduled start time accounting for maximal delay $\tau$.
This follows from (\ref{eq:s-update}) and $\Delta_O \geq  \textsc{rtt}_{ij} +\tau$ (recall that $s(n+1)$ is the start time measured at the source clock).
The $n^{th}$ grant schedules a burst whose leading edge arrives at the destination at (destination clock) time $s(n) + \textsc{rtt}_{ij} = g(n)+\Delta_O$. The channel is free for another burst to arrive  $d(n)+\Delta_R$ seconds later.  By (\ref{eq:g-update}), this time coincides with the arrival time of the next burst $g(n+1)+\Delta_O$, demonstrating that the channel is indeed fully utilized. \end{proof}

The sequence in which source nodes are attributed grants can be arbitrary. However, the phenomenon of transmitter blocking explained next can lead to poor performance when the sequence is deterministic. We therefore apply a randomized scheme in the evaluations presented below: the source to receive the $(n+1)^{th}$ grant is chosen uniformly at random from all sources except the one receiving grant $n$.

\subsection{Transmitter blocking}
\label{sec:blocking}
A source node receives grants from different destinations and these can overlap. If the number of overlaps is greater than the number of source transmitters, one or more grants cannot be fully satisfied. We assume the node fulfills grants in their arrival order. When one satisfied grant ends and some other unsatisfied grant has not entirely expired, the transmitter is retuned to the corresponding wavelength for the remaining grant interval. As with the slotted algorithm described in \cite{Saniee09}, transmitter blocking leads to lost capacity, as analyzed in Section \ref{sec:blockingperf} below.

 
\subsection{Resource allocation}
\label{sec:allocation}

One could implement a variety of different algorithms to determine the grant durations $d(n)$ in Proposition \ref{prop:recursion}. The present proposal derives from previously published arguments that resource sharing in networks should be flow-aware and that performance requirements can be satisfied by two mechanisms: per-flow fair scheduling in router output queues and an overload control intended to maintain efficiency when demand approaches or exceeds capacity \cite{OR05}. 

We suppose flows can be reliably identified and, to simplify the presentation, that they clearly fall into one of two categories: {\it backlogged} flows that have no exogenous rate limit beyond the considered MAN and always maintain a backlog of packets in the buffer, and  {\it non-backlogged} flows that are limited in rate elsewhere such that, with fair scheduling, they never have more than one packet in the buffer. 

The grant is intended to cater for all queued packets of non-backlogged flows and one ``quantum'' of bytes for each backlogged flow. This service policy can be realized using a priority fair queuing scheduler like that described in  \cite{KOR04} or  \cite{KOR05} that automatically distinguishes the two types of flow. 

Reports indicate the amount (in seconds of transmission time at the optical channel rate) of non-backlogged traffic arrived since the last report was sent together with the current number of backlogged flows. To account for grants that are wholly or partially unfulfilled due to transmitter blocking (or to grants arriving after their nominal start time), we introduce the notion of ``deficit''. The deficit is equal to the incremental amount of grant time, nominally available for emission since the last report was sent, that has not in fact been used. It is computed by the source and added to the count of new non-backlogged arrivals in the next report to be sent. 

When a grant is emitted by the destination, it calculates the allocation based on all reports received since the last grant was issued, i.e., the sum of non-backlogged arrivals and deficits plus the time to send one quantum of each backlogged flow reported in the last received report. These reporting and granting mechanisms ensure all buffered traffic is eventually served.

\subsection{Burst formation}
\label{sec:formation}

\begin{figure}[th]
\centering
\resizebox{9cm}{!}{\input{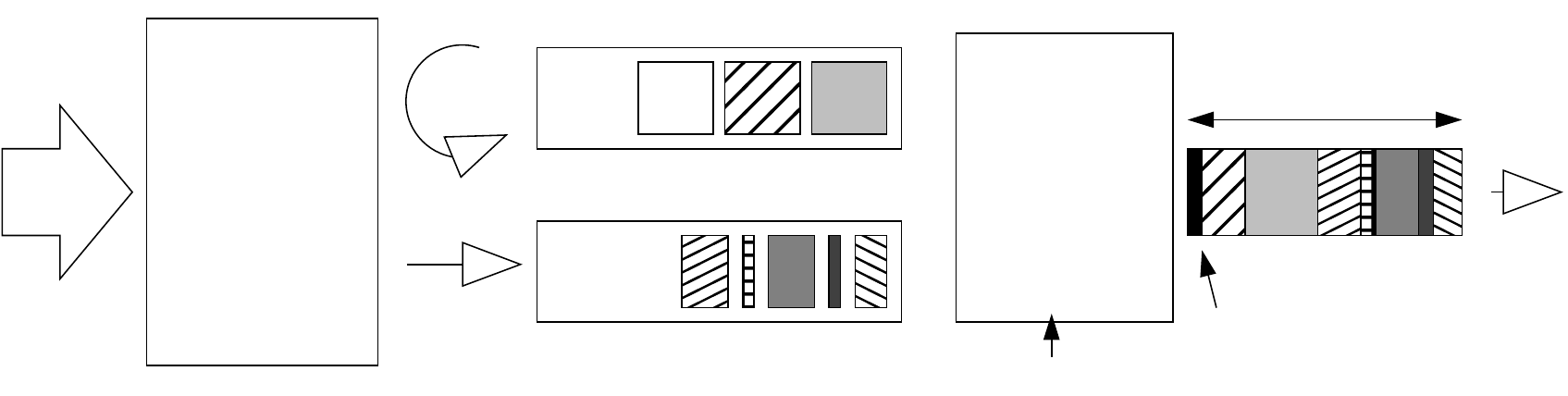_t}}
\caption{Burst formation: packets and fragments are assembled to fulfill each grant. }
\label{fig:burstformation}
\end{figure}

Figure \ref{fig:burstformation} illustrates the way packets arriving to the router for a given destination are formed into bursts. It is assumed in the drawing that there are no deficits to be taken into account and the grant has not suffered blocking. We assume packets can be freely fragmented in forming the optical bursts. They are reassembled after optical-electrical conversion on receipt of all fragments.

When a grant is fulfilled by a source at the designated start time, the buffer contents will typically not be the same as that reported because of the scheduling delay. The grant is used first to send packets of non-backlogged flows, including any that may have arrived since the report was issued. The remainder is used to send as many quanta of backlogged flows as possible, serving flows in round robin sequence, starting where service of the last grant ended.

Often, the grant is not sufficient to serve one quantum of each backlogged flow (as in the figure). Occasionally, because some backlogged flow has just ended, the grant may be too large and a small amount of capacity will be wasted. A more significant loss of capacity occurs because of transmitter blocking, as discussed in Section \ref{sec:blocking}.

%
%

\subsection{Overload control}
While per-flow scheduling is feasible for loads up to around 90\% (see results of Section \ref{sec:treesim} below), the number of flows will increase unboundedly if load should approach or exceed 100\% of wavelength capacity. It is necessary therefore to implement some form of overload control, both to preserve performance and to ensure the number of backlogged flows to be scheduled remains relatively small (less than 100, say). Overload would typically be manifested first by some source locally observing an inordinate number of backlogged flows in progress. That source would activate a load reduction mechanism (e.g., discarding the packets of a certain set of flows). Any other source observing overload would behave similarly leading, eventually, to an overall load that is manageable (not more than 90\%, say). Discussing precise details of this mechanism is beyond present scope.

 \section{Performance of an isolated destination tree}
 \label{sec:onetree}
 
We first consider the performance of an isolated destination tree of capacity $C$ (bits/sec), ignoring the impact of blocking due to transmitter contention at the source.  Let the number of sources be $S$ and assume demand due to source $i$ is $\rho_i C$ where $\rho_i$ is the channel load equal to the product of flow arrival rate (flows/sec) and mean flow size (bits) divided by $C$. We distinguish loads $\rho_i^B$ due backlogged flows  and $\rho_i^N$ due to non-backlogged flows  with $\rho_i^B + \rho_i^N = \rho_i$. Overall load is $\rho = \sum \rho_i$.

%



 \subsection{Traffic capacity}
 
Traffic capacity is defined as the limiting demand beyond which queues would grow indefinitely.
With the considered flow-aware MAC, it seems intuitively clear that the size of the reporting and guard time overhead $\Delta_R$ does not impact capacity since grants become larger as load increases leading to low {\it relative} overhead. The following theorem states this result for a system whose traffic consists of backlogged flows alone.

\begin{proposition}
\label{prop:capacity}
Assuming all flows are backlogged, arrive as a Poisson process and have a general size distribution with finite second moment, the  considered destination tree with per flow service is stable if and only if $\rho < 1$.
\end{proposition}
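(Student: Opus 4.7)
The plan is to identify the system with a single processor-sharing queue operating at a state-dependent service rate, and then to invoke classical stability arguments. Let $N_i(t)$ denote the current number of backlogged flows at source $i$, set $N(t)=\sum_i N_i(t)$, write $q$ for the quantum expressed in seconds of channel time, and $\bar\sigma$ for the mean flow size (in bits), so that the overall arrival rate of bits satisfies $\lambda\bar\sigma=\rho C$. By the recursion of Proposition~\ref{prop:recursion}, one complete round in which every source is granted once has duration $T=\sum_i(qN_i+\Delta_R)=qN+S\Delta_R$, and over such a round every backlogged flow --- regardless of its source --- receives exactly one quantum $qC$ of service. Hence every backlogged flow enjoys the same instantaneous throughput $qC/T$ and the aggregate service rate delivered by the channel is
\[
c(N)\;=\;\frac{qCN}{qN+S\Delta_R},
\]
which is strictly increasing in $N$ and converges to $C$ as $N\to\infty$. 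The system is therefore equivalent to an $M/G/1$ processor-sharing queue with state-dependent server rate $c(N)$.

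The ``only if'' direction follows from a standard rate comparison: since $c(N)\le C$ uniformly, the total work served by time $t$ is at most $Ct$, while the work injected by Poisson arrivals grows like $\rho Ct$ by the strong law; when $\rho\ge 1$ the workload diverges almost surely, ruling out positive recurrence.

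For the ``if'' direction I would first treat the exponential case, where $(N_1,\ldots,N_S)$ is itself a Markov chain: each backlogged flow completes at rate $(qC/T)/\bar\sigma$, so the drift of the linear Lyapunov function $V(N)=N$ is
\[
\mathcal{L}V(N)\;=\;\lambda\;-\;\frac{c(N)}{\bar\sigma}.
\]
Since $c(N)\uparrow C$ and $\lambda\bar\sigma=\rho C<C$, this drift is bounded above by some $-\eta<0$ outside the finite set $\{N\le N^*\}$ with $N^*=\lceil\rho S\Delta_R/(q(1-\rho))\rceil$, so Foster's criterion yields positive recurrence.

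For general flow-size distributions with finite second moment, finite-dimensional Markov descriptions no longer suffice, and I would turn to the measure-valued fluid-limit machinery developed by Gromoll--Puha--Williams for processor-sharing queues, combined with Dai's stability theorem. Considering the sequence of processes obtained by scaling the initial configuration by $k$ and accelerating time correspondingly, and since $\Delta_R$ and $S$ are fixed while $N$ grows with $k$, the overhead factor $S\Delta_R/(qN+S\Delta_R)$ vanishes uniformly on compact time intervals. The limiting fluid model is then the standard PS fluid model served at full rate $C$, which is known to be stable precisely when $\rho<1$; Dai's theorem lifts fluid stability back to positive recurrence of the prelimit process. The main technical obstacle is this last step: establishing tightness of the measure-valued descriptor and justifying the passage to the limit in the overhead-perturbed dynamics, so that Dai's theorem applies rigorously in this slightly non-standard state-dependent-rate setting.
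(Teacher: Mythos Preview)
Your necessity argument is fine, but the sufficiency argument rests on an identification that is not exact. The sentence ``The system is therefore equivalent to an $M/G/1$ processor-sharing queue with state-dependent server rate $c(N)$'' is precisely the \emph{approximation} the paper introduces in Section~\ref{sec:ps-approx} by letting $q\to 0$; for fixed $q$ the polling dynamics are genuinely different. Flows that arrive during a cycle are not reported until the next cycle starts, so they receive no service for a random time that depends on the full state, and flows may depart mid-cycle. Consequently $(N_1,\dots,N_S)$ is not Markov with the transition rates you write down, and your Foster drift computation pertains to the limiting PS model, not to the system whose stability is claimed. The fluid-limit programme you sketch for the general case would, if carried through, have to absorb this discrepancy as well as the state-dependent overhead; you are candid that this last step is the obstacle, but as written the proof is incomplete for both the exponential and the general case.

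The paper takes a different and more direct route that avoids both the PS reduction and the fluid-limit machinery. It works with the actual polling system, takes as state the vector of \emph{residual sizes} of all flows at all sources together with the server position, and embeds the chain at cycle boundaries. The Lyapunov function is quadratic-plus-linear in the residual sizes,
\[
\|N\|=\sum_{i=1}^S\sum_{p=1}^{L_i(N)}\bigl(N_{i,p}^2+\alpha N_{i,p}\bigr),
\]
and the drift over one cycle is computed explicitly: the quadratic part produces a term $-2\sum N_{i,p}$ that dominates the overhead and the second-moment contribution of the arrivals, while the linear part supplies the $\alpha(\rho-1)L(N)$ needed when many short flows are present. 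Positive recurrence then follows from Filonov's extension of Foster's criterion to random stopping times. This handles general size distributions with finite second moment in one stroke, without any limiting procedure; the price is the heavier state description and the somewhat delicate choice of $\alpha$ and of the threshold sets. Your approach, if the PS reduction could be coupled rigorously to the polling system (for example by a workload domination argument), would be more modular and would connect directly to the insensitivity results of Section~\ref{sec:ps-approx}, but that coupling is the missing ingredient.
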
 
That $\rho<1$ is necessary is obvious. The proof of sufficiency is outlined in the appendix. 
 As for any stable equitable polling system, the expected cycle time between successive visits to the same source is $ S \Delta_R / (1-\rho)$. This expression is valid for quite general traffic characteristics. 
 
 \subsection{A processor sharing approximation}
 \label{sec:ps-approx}
To estimate throughput performance and the distribution of the number of flows in progress, we consider the following limiting system. In each grant, the destination attributes a quantum $q$ of service to each flow in progress and spends an overhead of $\Delta_R = xq/S$ before moving to the next source. We consider the limit where $q \rightarrow 0$. This corresponds to a processor sharing system with a permanent customer having a relative service requirement  of $x$. First, assume all flows are backlogged and flow sizes have an exponential distribution. 

\subsubsection{Stationary distributions}
\label{sec:stationary}
Let $N_i(t)$ be the number of source $i$ flows in progress at time $t$ and let $M(t)=\sum N_i(t)$ be the overall flow population.  Process $(N(t))$ can be considered as a network of processor sharing queues \cite{Serfoso,BP02} where the service rate $\phi_i(n)$ of a flow at queue $i$ when $N_i(t) = n_i$ for $i = 1,\ldots,S$, is:
$$ \phi_i(n) = \frac{n_i}{m + x}, $$
where $m=\sum n_i$. It is easy to verify that these service rates are balanced, i.e., for all $i,j$,
$$\phi_i(n)\phi_j(n-e_i) = \phi_i(n-e_j)\phi_j(n). $$
We deduce that $(N(t))$ is a Whittle network \cite{BP02}  and that the system is therefore stable iff $\rho<1$ with stationary distribution
\begin{equation}
 \pi(n) = (m+x)_m \prod \frac{\rho_i^{n_i}}{n_i!} \; (1-\rho)^{(1+x)},
\end{equation}
where notation $(y)_r$ denotes $y(y-1)\dots(y-r+1)$.

The stationary distribution of $(M(t))$ is
\begin{equation}
\omega(m) = (m+x)_m\frac{\rho^m}{m!} \; (1-\rho)^{(1+x)},
\end{equation}
and the marginal distribution of $(N_i(t))$ is,
\begin{equation}
\pi_i(n_i) = (n_i+x)_{n_i} \frac{\tilde\rho_i^{n_i}}{n_i!} \; (1-\tilde\rho_i)^{(1+x)},
\label{eq:pi-i}
\end{equation}
where $\tilde\rho_i = \rho_i/(1 - \rho +\rho_i)$. The expected number of flows in progress at source $i$ is $\E(N_i(t))=\rho_i(1+x)/(1-\rho)$.

\subsubsection{Response times and throughput}


From \cite[Proposition 5]{BP02}, since $(N(t))$ is a Whittle network, the expected response time of a flow of size $s$ is proportional to $s$. From the same reference, the constant of proportionality for a source $i$ flow is $\E(N_i(t))/(\rho_i C)$.  We deduce the expected response time $R(s)$ of any flow of size $s$,
$$
R(s) = \frac{s}{C} \frac{(1+x)}{(1-\rho)}.
$$
Defining flow throughput $\gamma$ as the ratio $s/R(s)$, we have
\begin{equation}
\gamma  = (1-\rho)C/(1+x).
\label{eq:gamma}
\end{equation}

\subsubsection{Insensitivity}
Since the service rates of $(N(t))$ are balanced, all the results in Section \ref{sec:ps-approx} are true for general flow size distributions \cite{BP02}. They are true also if flows arrive in ``sessions''  and session arrivals are Poisson \cite{BMPV06}. Each session is a succession of flows separated by intervals between the end of one flow and the start of the next. The flow sizes and interval lengths can have general distributions and be correlated, and the distribution of the number of flows in a session can be general. These variables cannot, however, depend dynamically on the system state.  


\subsubsection{Accounting for non-backlogged traffic}
The above model ignores the impact of flows that are not backlogged. These flows can be incorporated approximately as follows. 

We assume non-backlogged flows, being handled with priority (cf. Section \ref{sec:allocation} above), simply reduce available capacity and demand: $C \leftarrow C(1-\sum \rho_i^N)$ and $\rho_i \leftarrow \rho_i^B$. In particular, flow throughput would then still be given by (\ref{eq:gamma}).


\subsection{Simulations}
\label{sec:treesim}
 
We have developed an {\it ad hoc} simulator in C. Packets of non-backlogged flows are considered as a composite stream handled with priority. Arrivals are modelled as a variable rate Poisson process. The rate depends on the number of non-backlogged flows in progress that we assume varies like the population of an M/M/$\infty$ system.  Backlogged flows arrive as a Poisson process and have a size drawn from an exponential distribution. We assume sufficient packets of these flows are always present to fulfill received grants until the flow has ended. In these simulations and those reported in Section \ref{sec:netsimulation} below, we have assumed all grants arrive on time.  We assume the grant delay tolerance is $\tau = 1$ms. The following parameters characterize the considered system configuration:

\vspace{2mm}
\begin{center}
\begin{tabular}{lr}
\hline
number of source nodes ($R-1$), & 10\\
channel capacity ($C$), & 1Gb/s\\
constant packet size, & 1KB\\
report + guard time ($\Delta_R$), & 2$\mu$s\\
rate of non-backlogged flows, & 2Mb/s\\
mean duration of non-backlogged flows, & 30s\\
mean size of backlogged flows, & 10MB\\
service quantum ($q$), & 1KB\\
\hline
\end{tabular}
\end{center}
\vspace{2mm}

\noindent The round trip propagation time between each source and the destination is drawn at random between .02 and 1ms, corresponding to distances of up to 100km.  

\begin{figure}[th]
\centering
\includegraphics[scale=.7]{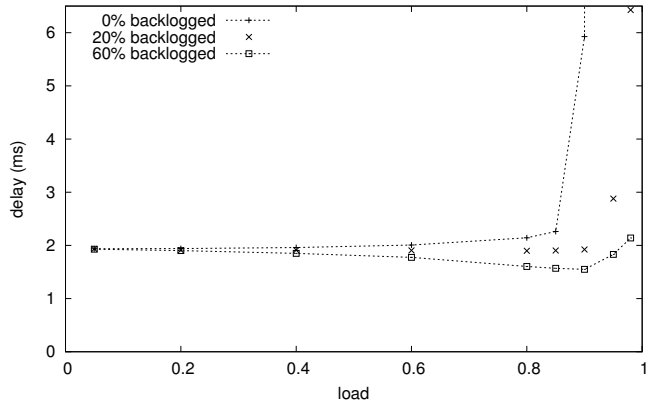}
\caption{Mean non-backlogged flow packet delay against load for traffic mixes with 0\%, 20\% and 60\% percent of backlogged flows - one destination tree}\label{fig:delay}
\end{figure}

Figure \ref{fig:delay} shows how the mean delay of  non-backlogged flow  packets varies with load. Results are presented for three proportions of backlogged flow traffic: 0\%, 20\% and 60\%. Note first that delays are small until load gets very close to capacity, except when traffic is 100\% non-backlogged. This case is not really representative, however, since when load is too high, even low rate flows actually become momentarily backlogged and would not in practice be given priority. 

Delay at low load is dominated by the report-grant exchange necessary to account for a new arrival. It is the same for all sources and is equal to the offset $\Delta_O$.  The service policy described in Section \ref{sec:allocation} is such that, when the proportion of backlogged flow  traffic is significant, delay first decreases with load. This is because newly arriving non-backlogged flow packets effectively ``steal'' the grant accorded earlier to backlogged flows. 

\begin{figure}[th]

\includegraphics[scale=.72]{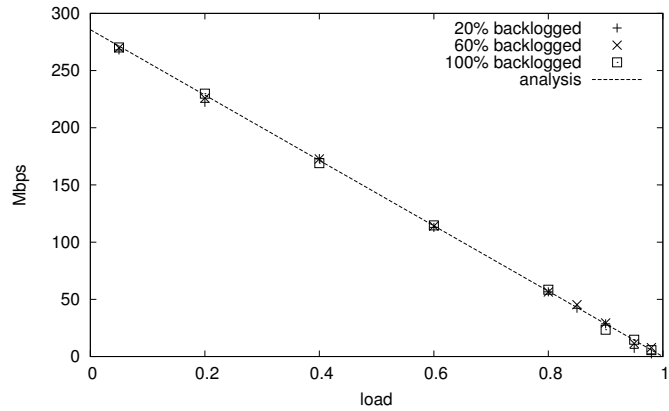}
\caption{Mean backlogged flow throughput against load for traffic mixes with 20\%, 60\% and \%100 percent of backlogged flows - one destination tree}
\label{fig:throughput}
\end{figure}

Figure \ref{fig:throughput} shows the mean throughput of the elastic flows. The figure shows simulation results for 20\%, 60\% and 100\% of elastic traffic. The figure also displays approximation (\ref{eq:gamma}), these results confirming its accuracy. As predicted by the analysis, throughput is greater when the quantum is increased. A quantum of 10KB yields a maximum throughput (at load zero) of 800Mb/s (results not shown).


Figure \ref{fig:histogram} shows the distribution of the number of flows at each node when all traffic is elastic and overall load is 90\%. Two cases are represented: one with a 1KB quantum ($x=2.5$) and one with 10KB ($x=.25$). The figure shows simulation results together with the analytical estimation (\ref{eq:pi-i}), confirming the accuracy of the latter. It is important to observe that the number of flows to be taken into account is relatively small, confirming that per-flow scheduling is scalable (see \cite{KMOR05a}). 


\begin{figure}[th]
\centering
\includegraphics[scale=.72  ]{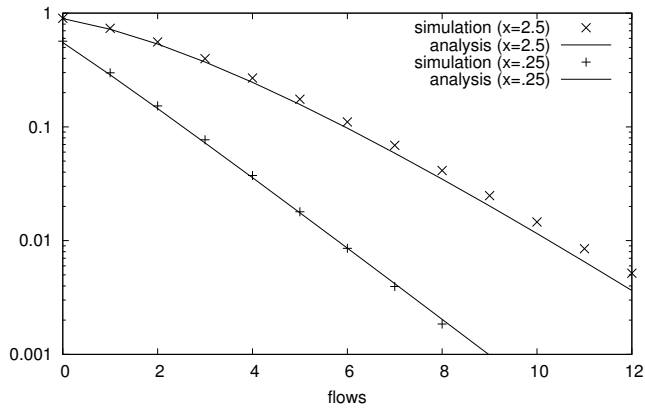}
\caption{Distribution of number of active flows in a given node }\label{fig:histogram}
\end{figure}

\section{Transmitter blocking}
\label{sec:blockingperf}

\begin{figure}[th]
\centering
\resizebox{8cm}{!}{\input{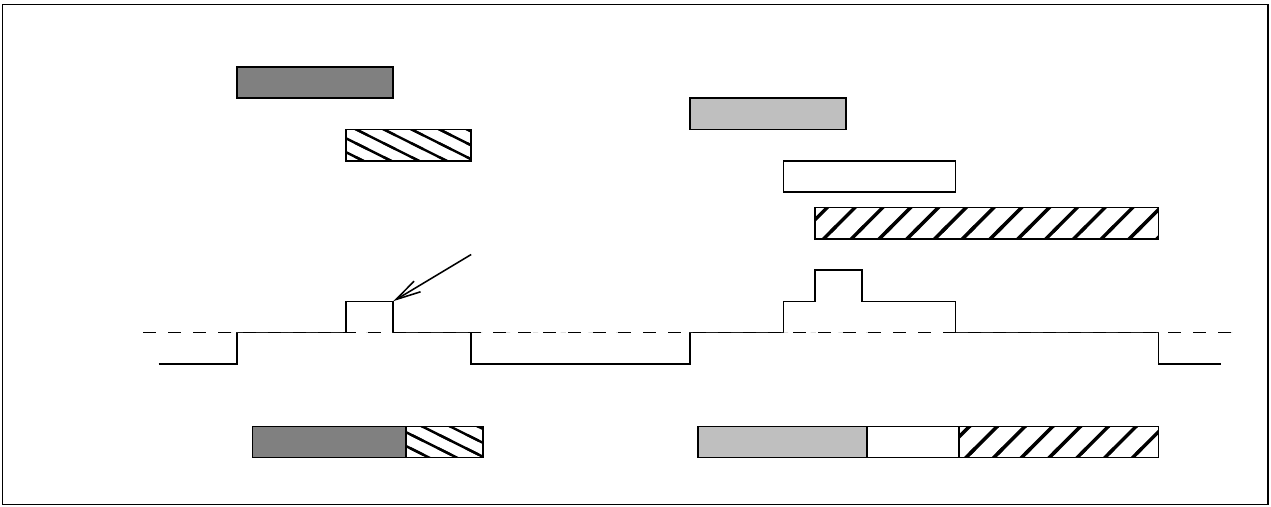_t}}
\caption{Grants to source $S$ partially overlap leading to blocking. }\label{fig:blocking}
\end{figure}

Since a source is typically equipped with a number of tunable transmitters less than the number of destinations, actual performance is somewhat worse than presented in Section \ref{sec:onetree} because of transmitter blocking.
Figure \ref{fig:blocking} illustrates a situation where grants issued to source node $i$ by five destinations, $j_1$ to $j_5$, partially overlap leading to blocking: the number of grants $G_i(u)$ issued for intervals that include time $u$ momentarily exceeds $T_i$, the number  of transmitters. We assume grants are served in arrival order leading to the source activity depicted at the bottom of the figure.

\subsection{Impact on performance}
To evaluate the impact of blocking we assume for the sake of simplicity that traffic in the network of $R$ nodes is symmetric:  the processes of grants issued to source $i$ by the different destinations are then statistically identical. The processes are not independent, however, since blocked portions of grants are not lost but contribute to the size of subsequent grants. We nevertheless assume this is the case and that the only impact of blocking is to increase the intensity of each grant process, as discussed below. The independence assumption is intuitively more reasonable as the number of nodes increases. We successively consider heavy traffic and light traffic approximations.

\subsubsection{Heavy traffic}
\label{sec:heavytraffic}
This approximation is useful in estimating the system traffic capacity. We assume the impact of the report message and guard time overhead is negligible in this regime.
Let the proportion of blocked grant time when $T_i=t$ be $B_t(\rho)$ where $\rho C/(R-1)$ is the demand from the source to each destination. Assuming stability, the probability a given destination issues a grant including an arbitrary instant $u$ is then $\rho'/(R-1)= \rho/(R-1)/(1-B_t(\rho))$.  Let $g_n = \P(G_i(u)=n)$ be the stationary distribution of the number of grants encompassing $u$. By the independence assumption, we have
\begin{equation}
g_n = {R-1 \choose n}\left(\frac {\rho'}{R-1}\right)^n \left(1-\frac {\rho'}{R-1}\right)^{R-1-n}.
\label{eq:g-dist}
\end{equation}
The proportion of blocked grant time, which by the symmetry assumption is the same for all destinations, is then,
\begin{equation}
B_t(\rho) = \sum_{n>t} (n - t) g_n/\sum_{n>0}i g_n.
\label{eq:blocking}
\end{equation}
The $t$ transmitters are fully used when $\rho' \rightarrow t$. Setting $\rho'$ to $t$, we deduce from (\ref{eq:g-dist}) and (\ref{eq:blocking}) the maximum allowed load $\rho^*$ and the corresponding fractional loss of capacity $B_t(\rho^*)$ for each destination tree.

 \begin{figure}[th]
\includegraphics[scale=.65]{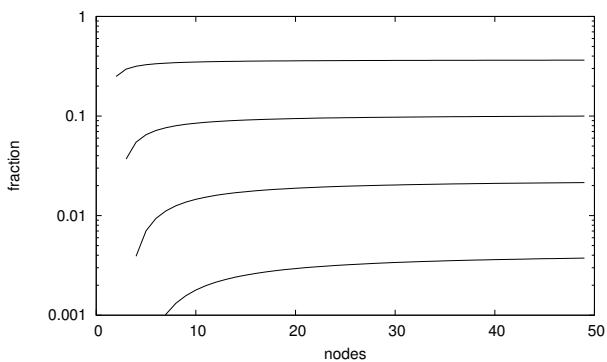}
\caption{Fractional loss of capacity for, from top to bottom, 1, 2, 3 and 4 tunable transmitters.}
\label{fig:prob-block}
\end{figure}

Figure \ref{fig:prob-block} plots the loss fraction $B_t(\rho^*)$ as a function of the number of network nodes, $R$, and the number of transmitters each one has, $t$. This grows rapidly to a limit as $R$ increases. In particular, we find $B_1(\rho) \rightarrow  e^{-1} \approx 0.37$, the same value derived for TWIN in \cite{Saniee09}. This is a significant loss in capacity that can be considerably reduced at the cost of additional transmitters: $B_2(\rho) \rightarrow 0.10$, $B_3(\rho) \rightarrow 0.02$. The `more transmitters/less fibre' tradeoff  may or may not be favourable depending on the economics of a particular network configuration. 

\subsubsection{Light traffic}
Throughput and latency in light traffic are dominated by the impact of the report and switch overhead. In this regime, most of the time sources have no data to send and spend their time emitting reports. A newly arrived flow competes for throughput with these reports. If the time to send a quantum of data is $q$ and the time for a report and ensuing guard time is $\Delta_R=xq/S$, throughput at near zero load in the absence of blocking would be $C/(1+x)$. A quantum of data can be partially blocked by a report to be sent to another destination but the amount of lost capacity turns out to be quite small. Ignoring this and assuming linearity between light and heavy traffic, we deduce the following approximation for throughput:
\begin{equation}
\gamma \approx \left(1-\frac{\rho}{1-B_t}\right) \cdot \frac{C}{1+x}.
\label{eq:gamma-blocking}
\end{equation}
  
\subsection{Simulations}
\label{sec:netsimulation}

We have simulated a symmetrical network with 10 sources and 10 destinations. 
The other simulation parameters are as in Section \ref{sec:treesim}. 

\begin{figure}[th]
\includegraphics[scale=.72]{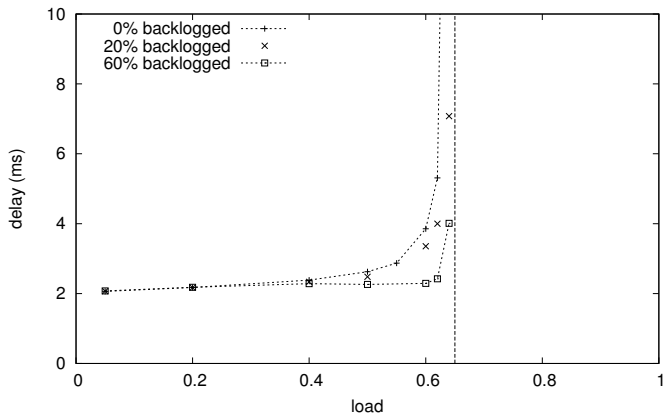}
\caption{Mean non-backlogged flow packet delay against load for traffic mixes with 0\%, 20\% and \%60 percent of backlogged flows - network of 10 destination trees, 1 transmitter per source }\label{fig:netdelay}
\end{figure}

Figure \ref{fig:netdelay} depicts the mean packet delay of non-backlogged flows for a traffic mix with 0\%, 20\% and 60\% of load from backlogged flows. The vertical line corresponds to the limiting load computed as in Section \ref{sec:heavytraffic} for 1 transmitter and 10 nodes ($\approx .65$). Note that delay is very small until load approaches this limit, even for the unfavourable case with 0\% backlogged traffic. 

\begin{figure}[th]
\includegraphics[scale=.72]{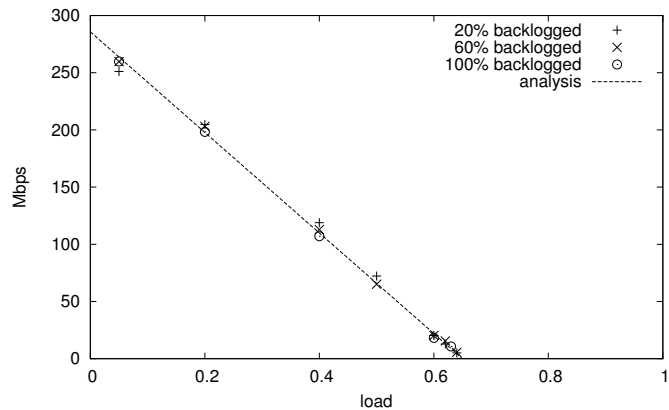}
\caption{Mean backlogged flow throughput against load for traffic mixes with 20\%, 60\% and \%100 percent of backlogged flows - network of 10 destination trees, 1 transmitter per source }\label{fig:netthru}
\end{figure}

Figure  \ref{fig:netthru} confirms that throughput depends linearly on load and remains insensitive to the traffic mix. Approximation (\ref{eq:gamma-blocking}) is accurate in this case.


 \section{Conclusion}
We have presented a MAN architecture based on passive optical network technology inspired by the TWIN proposal from Bell Labs \cite{Ross03,Saniee09}. Our principal contribution is to propose an original distributed flow-aware MAC protocol. Variable size bursts are scheduled by destination nodes to accommodate one quantum of traffic from each flow currently holding packets in the source buffer. This realizes network-wide per-flow fair sharing of each wavelength channel, which is arguably sufficient to realize the performance requirements of all types of flow. As in the original TWIN proposal, traffic capacity is reduced by the phenomenon of transmitter blocking when bursts, scheduled independently by different destinations, overlap at the source.

The performance of the MAC protocol has been evaluated by a combination of analysis and simulation. Latency of low rate, non-backlogged flows is confirmed to be very low, being dominated until demand approaches limit capacity by the time to exchange report and grant messages. Traffic capacity is proved to be equal to the channel rate and independent of the report and switchover overhead. Throughput performance of backlogged flows depends critically on the service quantum size. A large value (i.e., several packets) is preferable as long as the transport protocol is able to maintain a sufficiently large backlog. The analysis allows an evaluation of the tradeoff between the capacity lost due to transmitter blocking and the cost of equipping sources with more transmitters. 

The present work clearly represents only a preliminary evaluation. It remains to completely specify the way grants are communicated from destination to source. We have only considered an artificially symmetric network configuration. The technological feasibility of 
realizing the supposed burst formation scheme has not been fully explored. Despite these limitations, we believe the present proposal has been shown to hold considerable promise for the development of a new type of cost effective, energy efficient metropolitan area network.  




 

 \appendix{Stability of a flow-aware polling system}
This appendix presents a proof of Proposition \ref{prop:capacity} under the assumption that all flows are backlogged. To be concise and avoid complicated notation, we modify the system somewhat. It is considered as a polling system with a particular service discipline. We assume reports giving the current number of flows in progress are issued for all $S$ sources at the same instant, once per polling cycle. The server then visits each source and serves, for one unit of time, each reported flow before moving to the next source. Reports are issued just before the service of source 1 and take account of terminations and new arrivals at each source since the last report. Before leaving source $i$, after serving the last flow if any, the server remains for a switch overhead of $x_i$ time units. Let $x =x_1+\cdots+x_S$. 

Denote the arrival rate at source $i$ by $\lambda_i$ and its mean flow size and variance by $m_i$ and $v_i$, respectively, with $\rho = \sum \lambda_i m_i$. By assumption, $v_i$ is finite. For convenience we assume $x$ and all flow sizes are integer numbers of time units.  Denote the state of source $i$ by the vector
$N_i=(N_{i,p}, p\geq  1)$ where $n_{i,p}$ is  the remaining size in time units of  the $p^{th}$ flow,
with the convention that $n_{i,p}=0$ if the number of flows is less than $p$. 

$N=(N_i,1\leq i\leq S)$ is the overall system state and we denote by  
$L_i(N)=\sum_{p\geq 1} \ind{N_{i,p}>0}$, the number of flows at node $i$ with $L(N)=\sum_{i=1}^S L_i(N)$. 

Let $N(t)$ be the state at integer valued time $t$ and let $P(t)$ designate the position of the server at this time. $P(t)$ specifies all that is necessary to make $(N(t),P(t))$ an irreducible Markov chain, i.e.,  the node, the flow just served or the remaining overhead before moving to the next node.  

To prove Proposition \ref{prop:capacity}, we show that $(N(t),P(t))$ is ergodic by applying Filonov's theorem \cite[Theorem 8.6]{Robert:08}. To do so, we show that the following is a Lyapunov function for the system: 
\[
\|N\|=\sum_{i=1}^S \sum_{p=1}^{L_i(N)} N_{i,p}^2+\alpha N_{i,p},
\]
where $\alpha$ is some positive constant to be determined.


Assume the server moves to node 1 at time 0 and that the state of the system is $N$. Let $\tau$ denote the ensuing cycle time,
\[
\tau=x+\sum_{i=1}^S L_i(N)=x+L(N).
\]
This is clearly a stopping time with respect to the Markov
chain $(N(t),P(t))$.

At time $\tau$, one time unit of all  flows present at $0$ has been transmitted, i.e.,  for $p \leq L_i(N)$, $N_{i,p}(\tau)=N_{i,p}-1$, with the convention that flows of length
$0$ are removed. Additionally, a number of new flows will have arrived.  We have, therefore,
\begin{align}
\E_N&(\| N(\tau)\|)-\|N\|=\alpha ((\rho-1)L(N)+\rho x)\notag \\
& -2\sum_{i=1}^S\sum_{p=1}^{L_i(N)} N_{i,p}
+L(N)+\E_N(\tau)\sum_{i=1}^S\lambda_i(v_i+m_i^2)\notag\\
&=\alpha (\rho-1)L(N)
-2\sum_{i=1}^S\sum_{p=1}^{L_i(N)} N_{i,p}+ CL(N)+Dx,\label{eq1}
\end{align}
for some constants $C$ and $D$.

Fix  $\alpha$ such that $\alpha(\rho-1)/2+C<0$ and 
$T_0$ such that 
\begin{equation}\label{eq2}
\left(\frac{\alpha}{2}(\rho-1)+C\right)T_0+\left(\frac{\alpha}{2}(1-\rho)+D\right)x\leq 0.
\end{equation}
First assume $L(N)>T_0$. Relations~\eqref{eq1} and~\eqref{eq2} then imply
\begin{equation}
\label{F1}
\E_N(\| N(\tau)\|)-\|N\|\leq  -\frac{\alpha}{2}(1-\rho)\E_N(\tau).
\end{equation}
Now assume $L(N) \leq T_0$. For any positive integer $T_1$, we have
\[
\|N\|>T_1^2+\alpha T_1 \Rightarrow \sum_{i=1}^S\sum_{p=1}^{L_i(N)} N_{k,p}\geq T_1.
\]
Setting $T_1=\lfloor (D+1)x+CT_0\rfloor+1$, we deduce from (\ref{eq1}),
\begin{align}
\E_N(\| N(\tau)\|)-\|N\|&\leq 
-2\sum_{i=1}^S\sum_{p=1}^{L_i(N)} N_{i,p}-x+ CT_0+(D+1)x \notag\\
&\leq -\sum_{i=1}^S\sum_{p=1}^{L_i(N)} N_{i,p}-x \leq -\E_N(\tau).
\label{F2}
\end{align}
We conclude from (\ref{F1}) and (\ref{F2}), that there indeed exists $\gamma>0$ and a stopping time $\tau$ such that, if $\|N\|>T_1^2+\alpha T_1$, then
\[
\E_N(\| N(\tau)\|)-\|N\|\leq -\gamma \E_N(\tau).
\]
This completes the proof under the stated simplifying assumptions. It is possible to remove these assumptions but at the cost of considerably more complicated notation. 

\providecommand{\bysame}{\leavevmode\hbox to3em{\hrulefill}\thinspace}
\providecommand{\MR}{\relax\ifhmode\unskip\space\fi MR }
\providecommand{\MRhref}[2]{%
  \href{http://www.ams.org/mathscinet-getitem?mr=#1}{#2}
}
\providecommand{\href}[2]{#2}

\end{document}